\theoremstyle{definition}
\newtheorem{definition}{Definition}
\newtheorem{theorem}{Theorem}
\newtheorem{corollary}{Corollary}
\newtheorem{proposition}{Proposition}
\DeclareMathOperator*{\argmax}{arg\,max}
\newcommand{\dt}{\ensuremath{\delta t}\xspace}
\title{Capital Games and Growth Equilibria} 
\author{Ben Abramowitz}
\date{July 2025}
\begin{document}

\maketitle

\begin{abstract}
    We introduce capital games, which generalize the definition of standard games to incorporate dynamics. In capital games, payoffs are in units of capital which are not assumed to be units of utility. The dynamics allow us to infer player utilities from their individual payoffs and linearizable capital dynamics under the assumption that players aim to maximize the time-average growth rate of their capital.
\end{abstract}

\section{Introduction}
Based on the seminal work of John von Neumann and Okar Morgenstern, Game Theory assumes that player behavior is based on a desired to maximize \textit{utility}. The challenge in practice is to determine what player utilities are, because outcome payoffs may not be in units of utility. Here we show how tools from Ergodicity Economics allow us to generalize the original definition of games by incorporating dynamics, derive utility functions from player payoffs, and compute equilibria in a generalized model of games.

First we will introduce the background concepts of utility functions, the von Neumann-Morgenstern Utility Theorem, standard games, gambles, and gamble problems, using the infamous coin flip example of~\cite{peters2016evaluating} as a reference point. Then we will provide a generalized definition of games that we call capital games that incorporate dynamics, and a generalized definition of equilibria called growth equilibria. Finally, we show how computing the growth equilibria of a capital game with linearizable dynamics is equivalent to computing Nash equilibria in a corresponding standard game.

\section{Preliminaries}

\subsection{Utility Functions}
In their seminal book \textit{The Theory of Games and Economic Behavior} John von Neumann and Oskar Morgenstern (vNM) proved the von Neumann–Morgenstern Utility Theorem.
The vNM Utility Theorem is a statement about preferences over lotteries, where a lottery is a discrete probability distribution over a given set of outcomes.
The theorem tells us that for any ordinal preference over lotteries that satisfies basic axioms, there is a way to assign real values, or \textit{utilities}, to the outcomes of the lotteries such that the ordering of expectation values of utilities matches the preference relation.

\paragraph{vNM Utility Theorem}
Define a lottery to be a discrete probability distribution over a set.
We will denote that lottery $A$ is strictly preferred to lottery $B$ by $A \succ B$, indifference between the lotteries as $A \sim B$, and weak preference by $A \succeq B$. We now define four axioms that characterize preference relations.

\begin{itemize}
    \item Completeness: For every pair of lotteries, either $A \succeq B$ or $B \succeq A$
    \item Transitivity: If $A \succeq B$ and $B \succeq C$ then $A \succeq C$
    \item Continuity: If $A \succeq B \succeq C$ then there exists a probability $p \in [0,1]$ such that $pA + (1-p)C \sim B$ 
    \item Independence: For all lotteries $C$ and probabilities $p \in [0,1]$, $A \succeq B$ if and only if $(1-p)A + pC \succeq (1-p)B + pC$
\end{itemize}

Without loss of generality, any time we talk about two or more lotteries we will assume they are defined over the same set of outcomes, but may assign a probability of 0 to certain outcomes.

\begin{theorem}[vNM Utility Theorem]
    For any preference relation satisfying Completeness, Transitivity, Continuity, and Independence, there is a utility function $u$ that assigns a real number to every possible outcome of the lotteries such that for any two lotteries $A,B$, $A \succ B$ if and only if $E[u(A)] > E[u(B)]$.
\end{theorem}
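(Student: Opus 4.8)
The plan is to construct the utility function explicitly by ``calibrating'' every outcome against a fixed best and worst outcome, and then to use the Independence axiom to collapse an arbitrary lottery to such a two-outcome calibration lottery. Throughout I will work with lotteries of finite support (equivalently, a finite outcome set), so that the reduction step below is a finite induction. First I would derive two workhorse lemmas from the axioms. \emph{Monotonicity in the mixing weight:} if $A \succ B$ and $1 \ge p > q \ge 0$, then $pA + (1-p)B \succ qA + (1-q)B$; this follows by writing the higher mixture as a further mixture of the lower one with $A$ and applying Independence (used for mixing weights strictly below $1$, to avoid the degenerate endpoint). \emph{Unique calibration:} if $A \succ B$ and $A \succeq C \succeq B$, then there is exactly one $p \in [0,1]$ with $C \sim pA + (1-p)B$ --- existence is Continuity and uniqueness is Monotonicity together with Transitivity.

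Next I would build $u$ on outcomes. If every outcome is indifferent to every other, then by Independence every lottery is indifferent to every other and $u \equiv 0$ works trivially, so assume otherwise. Viewing each outcome as a degenerate lottery, Completeness and Transitivity (on the finite outcome set) give a best outcome $\overline{o}$ and a worst outcome $\underline{o}$ with $\overline{o} \succ \underline{o}$ and $\overline{o} \succeq o \succeq \underline{o}$ for every outcome $o$. Define $u(o)$ to be the unique $p$ from the calibration lemma with $o \sim p\,\overline{o} + (1-p)\,\underline{o}$.

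Then I would establish the representation. Given a lottery $A$ that places weight $a_i$ on outcome $o_i$, replace the outcomes $o_1, o_2, \ldots$ one at a time by their calibration lotteries, invoking Independence at each step (the remaining mixture serves as the ``context'' lottery $C$) and chaining the resulting indifferences by Transitivity; after normalizing, this yields $A \sim \bigl(\sum_i a_i u(o_i)\bigr)\overline{o} + \bigl(1 - \sum_i a_i u(o_i)\bigr)\underline{o}$, and $\sum_i a_i u(o_i)$ is exactly $E[u(A)]$. Doing the same for $B$ and applying Transitivity and Monotonicity, $A \succ B$ holds iff the calibration weight of $A$ exceeds that of $B$, i.e. iff $E[u(A)] > E[u(B)]$ (and, as a byproduct, $A \sim B$ iff the weights are equal, giving the full characterization).

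The main obstacle is the bookkeeping in the one-outcome-at-a-time reduction: each application of Independence must be set up with a legitimate context lottery and mixing weight, the intermediate lotteries must be renormalized correctly so that the weights on $\overline{o}$ and $\underline{o}$ accumulate to $E[u(A)]$, and the chain of indifferences must be composed cleanly through Transitivity. A secondary subtlety is that the Independence axiom as literally stated is degenerate at mixing weight $1$ (it would force $A \succeq B$ for all $A,B$), so the argument must only ever invoke it for weights in $[0,1)$; and the restriction to finite-support lotteries is what makes the inductive reduction terminate, so an unrestricted statement would need additional continuity/approximation machinery.
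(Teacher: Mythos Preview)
The paper does not supply its own proof of this theorem: it is quoted as background and attributed to von Neumann and Morgenstern, with no argument given. So there is nothing in the paper to compare your proposal against.

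That said, your sketch is the standard textbook construction and is correct in outline: fix extreme outcomes, define $u$ via the unique calibration probability guaranteed by Continuity and the monotonicity consequence of Independence, and then reduce an arbitrary finite-support lottery to a two-outcome calibration lottery by iterated substitution. Your caveats are also well placed: the finite-support restriction is exactly what makes the inductive reduction go through, and you are right that the Independence axiom \emph{as the paper states it} is degenerate at $p=1$ (it would collapse to $A \succeq B \iff C \succeq C$, forcing universal indifference), so one must read it with $p \in [0,1)$.
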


The work of vNM left open the question of what our preference relation over lotteries should be in any given setting. That is the question to which this paper is devoted.

\subsection{The PGM Coin Flip}
Before going further, it is useful to review the infamous coin flip example from \cite{peters2016evaluating}(PGM).

Suppose you have the opportunity play a simple single-player game in which you must pay some amount of money $x$ to play, and then get to flip a fair coin. If the coin lands heads, then you will gain 50\% of your wager for a payoff of $1.5x$, but if the coin lands tails you will lose 40\% of your wager for a payoff of $0.6x$.
You do not have to play the game. Your other option is to simply keep your $x$ units of money.
Naturally, we assume you would strictly prefer winning and increasing your wealth over not playing, and you would strictly prefer not playing to losing the bet and decreasing your wealth.

Should you choose to play or to keep your money? Deciding whether to play means choosing between two lotteries with the outcome set \{Heads, Tails, No Flip\} where the outcomes come with monetary values $(1.5x, 0.6x, x)$, respectively. One lottery assigns the probabilities $(0.5, 0.5, 0)$ to the outcomes and the other lottery assigns the probabilities $(0,0,1)$.

For the moment, let's try treating the monetary payoffs as utilities. Then we can take the expectation value for each lottery and compare them. Playing the game has expectation value $\frac{1.5x + 0.6x}{2} = 1.05x$, and keeping your money has the trivial expectation value of $x$ since it is deterministic. Since $1.05x > x$, we are tempted to conclude that playing the game is better than keeping our money because it appears to offer greater utility. If this is true, then arguably we should play the game as many times as we possibly can.

Imagine that you were to play the PGM coin flip game many, many times in \textit{parallel}. After all of the games conclude, you sum your total winnings (or loses) to get your final payoff. Let $2n$ be the number of games you play. In the limit as $n \rightarrow \infty$, you expect the coin to land heads in approximately half of the games and tails in approximately half of the games. Your expected total payoff therefore converges to $x\frac{1.5n + 0.6n}{2n} = 1.05x$. You expect to make money, and therefore playing the games many, many times in parallel is better than holding onto your money, as long as you have at least $2xn$ available to play all of the games.

Now imagine instead that you have the opportunity to play the game not just once, but many, many times in \textit{sequence}. You start with an initial wager of $x$. In each round the coin is flipped your wealth grows or shrinks and becomes your wager in the next round, with no opportunity to add or remove money from the game. In the first two rounds if you flip heads twice in a row then your wealth would become $x(1.5)^2 = 2.25x$. If the coin flips tails twice in a row, then your wealth would be $x(0.6)^2 = 0.36x$ after two rounds. And finally, if you flip heads once and tails once, in either order, then your wealth would be $x(1.5)(0.6) = 0.9x$ after two rounds.

After an even number of rounds $2n$ of this sequential game, if you have flipped an equal number of heads and tails, then your wealth has decreased from $x$ to $x(0.9)^n$. You would need to consistently flip more heads than tails just to break even. By the law of large numbers, as $n$ increase, you expect roughly the same number of heads and tails, and therefore you should expect your wealth to decrease at an exponential rate. Playing the sequential game is a bad bet and it would be better not to play and just keep your money.

We observe from the PGM coin flip example that if you get to choose between the two lotteries many times in parallel, then the payoffs act like utilities, but when the games are played in sequence, the payoffs are no longer utilities.
When the games are sequential, your wealth exhibits multiplicative dynamics, as opposed to the additive dynamics from the parallel games, and this leads to different utility functions.

In the case of a one-shot game -- only a single coin flip -- should we choose to play? The answer is indeterminate because we have not specified the broader dynamics within which our decision is made. The answer, and our utility function, depend on what we expect to happen to our wealth in the future, including the future decisions we expect to face. These dynamics can be modeled by stochastic processes.

\subsection{Non-Utility Observables}
Lotteries are defined as discrete probability distributions over sets of outcomes, but those outcomes do not necessarily have real values associate with them. The outcomes can be anything.
An \textit{observable} is a mapping of lottery outcomes to real values. Utility functions are observables but, crucially, not all observables are utility functions. In other words, vNM utility is defined to be the variable whose expectation value we want to maximize, but that is not the case for all observables.

\begin{quote}
    \textit{We have practically defined numerical utility as being that thing for which the calculus of mathematical expectations is legitimate.} -\cite{von2007theory}.
\end{quote}


Sometimes the outcomes of lotteries have real numbers naturally associated with them, e.g., monetary amounts. The monetary values of the outcomes are an observable, but that does not mean they are utilities.
In the PGM coin flip example, the monetary payoffs are an observable, but whether that observable is a utility function depends on the broader dynamics.
%

When outcomes are not associated with an observable, we have no method for inferring preference relations over lotteries, and hence no general method of constructing utility functions. Our goal is to find a fairly general method for inferring utility functions from observables.

\paragraph{Observables as Preferences Over Outcomes}
For our purposes, we will assume that the observable for any set of outcomes corresponds to a preference over outcomes. A larger observable value means the outcome is preferred, and outcomes with the same observable value are equally preferable. For example, bigger monetary payoffs are preferred to smaller monetary payoffs. In other words, we will interpret our observables as implying a preference ordering over outcomes, while utility functions are a special subclass of observable that also imply a preference ordering over lotteries. We will frequently refer to observables as payoffs.







\subsection{Games and Nash Equilibrium}
We now provide a background on key concepts, definitions, and results from Game Theory.

\begin{definition}[Standard Game]
A standard game is a finite, n-person game characterized (in normal form) by a tuple $(N, A, u)$ where:
\begin{itemize}
    \item N is a finite set of $n \geq 1$ players, indexed by $i$;
    \item $A = (A_1, \ldots, A_n)$ where $A_i$ is a finite set of actions (or pure strategies) available to player $i$
    \item $u = (u_1, \ldots, u_n)$ where $u_i : A_1 \times \ldots \times A_n \rightarrow \mathbb{R}$ is a real-valued function that determines the utility of every player based on the action profile
\end{itemize}
\end{definition}

\paragraph{Action Profiles and Payoffs}
We let $\mathcal{A} = A_1 \times \ldots \times A_n$ denote the set of all possible action profiles, where each action profile $a = (a_1, \ldots, a_n) \in \mathcal{A}$ specifies a single action for every player. We can see by the definition of $u_i$ that the utility of each player depends on the collective actions of all players, not just their own (when $n > 1$).

\paragraph{Mixed Strategy Profiles}
Players may use mixed strategies, which are probability distributions over their available actions. For any set $X$, let $\Delta(X)$ denote the set of all probability distributions over $X$. Then the set of strategies $S_i$ available to player $i$ is $S_i = \Delta(A_i)$. Players select their strategies simultaneously and independently. 

A mixed strategy profile is given by $s = (s_1, \ldots, s_n) \in \mathcal{S} = S_1 \times \ldots \times S_n$. We denote by $s_i(a^j)$ the probability that strategy $s_i$ assigns to taking action $a^j \in A_i$.
Given a mixed strategy profile $s$, we define $s(a) = \prod_{i \in N} s_i(a_i)$ to be the probability of the action profile $a = (a_1, \ldots, a_n)$ being realized under $s$.
The support of $s_i$ is all actions $a^j \in A_i$ such that $s_i(a^j) > 0$, and similarly, the support of $s$ is all action profiles $a$ such that $s(a) > 0$.

All players are assumed to have complete and perfect information, so they know the strategy spaces and utility functions of the other players, and it is common knowledge that all players have complete and perfect information.

\begin{definition}[Expected Utility]
    Given normal-form game $(N, A, u)$, and a mixed strategy profile $s = (s_1, \ldots, s_n)$, the expected utility of player $i$ is defined as:
    $$E[u_i|s] = \sum\limits_{a \in \mathcal{A}} u_i(a) s(a)$$
\end{definition}

The \textit{best response} of a player is to maximize their expected utility.
We define $s_{-i} = s \backslash s_i$ to be the strategy profile excluding player $i$, and can therefore use the shorthand $s = (s_i, s_{-i})$. Similarly, we let $a_{-i}$ be the set of all actions taken by players other than player $i$, so that action profile $a$ can be represented by $(a_i, a_{-i})$.

\begin{definition}[Response]
    A response is a choice of strategy $s_{i}$ given a profile $s_{-i}$ of the strategies of all other players.
\end{definition}

\begin{definition}[Best Response]
    Player $i$'s best response to strategy profile $s_{-i}$ is a mixed strategy $\bar{s}_i \in S_i$ such that $E[u_i|(\bar{s}_i,s_{-i})] \geq E[u_i|(s_i, s_{-i})]$ for all $s_i \in S_i$.
\end{definition}

A Nash equilibrium is a strategy profile such that no player $i$ can increase their expected utility by unilaterally changing their strategy. In other words, $s$ is a Nash equilibrium if for all players $i$, $s_i$ is a best response to $s_{-i}$.

\begin{definition}[Nash Equilibrium]
    A Nash equilibrium is a strategy profile $s$ such that $s_i$ is a best response to $s_{-i}$ for all players $i \in N$, maximizing their expected utility.
\end{definition}

\begin{theorem}[Existence of Nash Equilibria]
    All standard games have at least one Nash equilibrium~\citep{nash1950equilibrium}.
\end{theorem}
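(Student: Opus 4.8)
The plan is to invoke a topological fixed-point theorem, following Nash's original argument. First I would observe that the strategy space $\mathcal{S} = S_1 \times \cdots \times S_n$, being a finite product of probability simplices $\Delta(A_i)$, is a nonempty, compact, convex subset of a Euclidean space. The goal is then to construct a continuous self-map of $\mathcal{S}$ whose fixed points are exactly the Nash equilibria, and to apply Brouwer's fixed-point theorem.

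For the construction, for each player $i$ and each action $a^j \in A_i$ I would define the \emph{gain} from deviating to the pure action $a^j$ as $g_i^j(s) = \max\bigl(0,\; E[u_i \mid (a^j, s_{-i})] - E[u_i \mid s]\bigr)$, where $E[u_i \mid (a^j, s_{-i})]$ is the expected utility to player $i$ when $i$ plays $a^j$ with certainty and the others play $s_{-i}$. Each $g_i^j$ is continuous, since by the Expected Utility definition $E[u_i \mid s]$ is multilinear in the mixed strategies and $\max(0,\cdot)$ is continuous. Then I would define $f : \mathcal{S} \to \mathcal{S}$ coordinatewise by
\[
  f(s)_i(a^j) = \frac{s_i(a^j) + g_i^j(s)}{1 + \sum_{k} g_i^k(s)},
\]
which is well-defined and continuous and maps into $\mathcal{S}$ (each numerator is nonnegative and, summed over $j$, equals the denominator). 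Brouwer's theorem then produces a fixed point $s^*$.

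The crux — and the step I expect to be the main obstacle — is showing that such a fixed point has all gains equal to zero, hence is a Nash equilibrium. At a fixed point, rearranging $f(s^*)_i(a^j) = s^*_i(a^j)$ gives $s^*_i(a^j)\sum_k g_i^k(s^*) = g_i^j(s^*)$ for every $i$ and $j$. For each player $i$, the quantity $E[u_i \mid s^*]$ is a convex combination, over the support of $s^*_i$, of the pure-action payoffs $E[u_i \mid (a^j, s^*_{-i})]$, so at least one action $a^j$ in the support has $E[u_i \mid (a^j, s^*_{-i})] \le E[u_i \mid s^*]$ and therefore $g_i^j(s^*) = 0$; since $s^*_i(a^j) > 0$, the identity forces $\sum_k g_i^k(s^*) = 0$, so every gain for player $i$ vanishes. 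With all gains zero, no pure deviation strictly increases player $i$'s expected utility, and since any mixed deviation is a convex combination of pure ones, no deviation does; thus $s^*_i$ is a best response to $s^*_{-i}$ for every $i$, so $s^*$ is a Nash equilibrium.

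As an alternative I would record the Kakutani route: consider the best-response correspondence $s \mapsto \prod_{i} \{\, \bar s_i \in S_i : E[u_i \mid (\bar s_i, s_{-i})] \ge E[u_i \mid (s_i', s_{-i})] \text{ for all } s_i' \in S_i \,\}$, check that it is nonempty- and convex-valued with a closed graph — the closed-graph property being the delicate point, following again from continuity of expected utility — and apply Kakutani's fixed-point theorem; its fixed points are by definition exactly the Nash equilibria. I would present the Brouwer argument as the primary line because it is more self-contained.
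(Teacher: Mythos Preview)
Your argument is correct and is, in fact, Nash's own proof: the gain-function map you wrote down is precisely the one from \citet{nash1950equilibrium} (and Nash's 1951 \emph{Annals} paper), and your handling of the crux --- picking a pure action in the support whose payoff does not exceed the mixed-strategy payoff, forcing $\sum_k g_i^k(s^*)=0$ --- is exactly the standard step. The Kakutani alternative you record is also correct and is the argument from Nash's 1950 PNAS note.

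That said, the paper does not actually prove this theorem. It is stated as background with a citation to Nash and no accompanying proof; the paper's own contributions (the Equilibrium Correspondence theorem and its corollaries) then invoke it as a black box. So there is nothing in the paper to compare your argument against: you have supplied a full proof where the paper deliberately outsources one. If your goal is to match the paper, a one-line citation suffices; if your goal is a self-contained write-up, what you have is fine as written.
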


Although Nash equilibria always exist, computing an equilibrium of a standard game is known to be PPAD-complete~\citep{daskalakis2009complexity}.



\subsection{Gambles and Best Responses}
We now introduce key concepts from Ergodicity Economics, which we will apply to the study of games~\citep{eebook}.

\begin{definition}[Gamble]
    A gamble is a tuple $(Q, \dt)$ where $Q$ is a discrete random variable that takes one of $K$ real values $\{q_1, \ldots, q_K\}$ each with probability $\{p_1, \ldots, p_K\}$, and \dt is a unit of time called duration.
\end{definition}

A gamble is a lottery whose outcomes are real-valued (i.e., an observable) and which takes place over some amount of time.

\begin{definition}[Gamble Problem]
    A gamble problem is a problem of choosing between two or more available gambles.
\end{definition}

\begin{proposition}
    In a standard game, the problem of a player choosing their best response $s_i \in S_i$ given $s_{-i}$ and $u_i$ is a gamble problem if we let $\dt$ be the duration of the game.
\end{proposition}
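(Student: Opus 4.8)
The plan is to exhibit, for each candidate response $s_i$, an explicit gamble, and then to observe that the best-response problem is exactly the problem of selecting among the resulting menu of gambles. First I would fix the opponents' profile $s_{-i}$, which is given. For any $s_i \in S_i$, the joint profile $s = (s_i, s_{-i})$ induces a probability distribution $s(\cdot)$ over the action profiles $\mathcal{A}$, and composing with $u_i$ yields a real-valued random variable, namely the realized payoff to player $i$; call it $Q_{s_i}$.

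Next I would check that $(Q_{s_i}, \dt)$, with $\dt$ the duration of the game, satisfies the definition of a gamble. Since each $A_j$ is finite, $\mathcal{A}$ is finite, so the image $u_i(\mathcal{A}) \subseteq \mathbb{R}$ is a finite set; list its distinct elements as $q_1, \dots, q_K$ with $K \le |\mathcal{A}|$. The probability that $Q_{s_i}$ equals $q_k$ is $p_k = \sum_{a \in \mathcal{A}\,:\, u_i(a) = q_k} s(a) = \sum_{a\,:\, u_i(a) = q_k} s_i(a_i)\prod_{j \ne i} s_j(a_j)$, which is well defined, nonnegative, and sums to $1$ over $k$. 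Hence $Q_{s_i}$ is a discrete random variable taking one of $K$ real values with a valid probability vector, so $(Q_{s_i},\dt)$ is a gamble.

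Then I would identify the best-response problem with a gamble problem. The assignment $s_i \mapsto (Q_{s_i}, \dt)$ produces a family of available gambles indexed by $S_i$; as long as player $i$ has at least two actions (so that $S_i$ is not a singleton), there are two or more available gambles, and the task of choosing $s_i$ is, by definition, a gamble problem. Moreover $E[u_i \mid s] = \sum_{a \in \mathcal{A}} u_i(a) s(a) = \sum_{k} q_k p_k = E[Q_{s_i}]$, so the best-response criterion of maximizing expected utility coincides with choosing a gamble of maximal expectation from this menu; this confirms the correspondence is the intended one and not merely a formal relabeling.

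I do not anticipate a deep obstacle here, since the content is essentially bookkeeping. The only points requiring care are: (i) distinct action profiles may share a payoff value, so one must group them when defining the $p_k$, otherwise the $q_k$ need not be distinct; (ii) distinct strategies $s_i$ may induce the same gamble, which is harmless because a gamble problem needs only a set of available gambles, not an injective labeling; and (iii) the degenerate case $|A_i| = 1$, where there is a unique available gamble and no genuine choice, which I would handle either by the standing convention that action sets are nontrivial or by simply noting the statement is vacuous in that case.
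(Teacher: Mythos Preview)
Your proposal is correct and follows essentially the same approach as the paper: fix $s_{-i}$, observe that each $s_i$ induces a finite discrete distribution over the real values $u_i(a)$, and identify this with a gamble of duration $\dt$. You are in fact more careful than the paper's own proof, which simply lists the realizations as $\{u_i(a)\}_{a\in\mathcal{A}}$ with probabilities $\{s(a)\}_{a\in\mathcal{A}}$ without grouping duplicate payoff values or addressing the degenerate $|A_i|=1$ case; your points (i)--(iii) tidy up exactly those loose ends.
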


\begin{proof}
    Given $s_{-i}$, each choice of strategy $s_i \in S_i$ creates a different strategy profile $s = (s_i, s_{-i})$, which assigns a probability $s(a)$ to each possible action profile $a$, of which there are finitely many. Since $u_i$ is deterministic, this means that $s$ also induces a probability distribution over the finite number of possible utilities of player $i$. Of course the utilities of $i$ are real-valued, and $\sum_{a \in A} s_i(a) = 1$. Therefore choosing a strategy $s_i$ given $s_{-i}$ means choosing a gamble defined by possible realizations $\{u_i(a)\}_{a \in \mathcal{A}}$ with associated probabilities $\{s(a)\}_{a \in \mathcal{A}}$, and \dt is the duration of the game.
\end{proof}

From the definition of a vNM utility function, we have built in the assumption that the player wishes to maximize the expected value of utility.
In practice, we are given gambles with the values $\{q_1, \ldots, q_K\}$, but we cannot assume these values are the same units as our utility.

As \cite{eebook} argue, and as we saw with the PGM coin flip, the definition of a gamble on its own does not provide sufficient information to choose between gambles. This equates to saying that a set of gambles does not given us enough information to uniquely determine a utility function. 

We cannot generally compare possible gambles without (1) knowing how the outcome affects the future, including future decisions; (2) our circumstances, i.e., in the relevant units ($q_i$), how much are we initially endowed with before the gamble; and (3) our personal preferences, e.g., risk tolerance. 
We will put aside the psychological third aspect for our purposes, and extend our treatment of games based on the first two; by having players start with some initial endowment of an observable of interest (\textit{capital}) and treat each game as a single decision in a (potentially infinite) sequence of decisions each player faces. This sequence of future decisions characterizes the dynamics of the player's ``wealth" in units of the observable.

In practice, the realization of each outcome of a gamble could have a different duration. For brevity we will assume all outcomes of a gamble have the same duration.

\subsection{Contributions}
Our goal is to derive utility functions from the payoffs of a sets of lotteries so that the tools of Game Theory can be brought to bear in reasoning about player behavior and equilibria. The key challenge is that given a set of lotteries with real-valued outcomes, there is no implied preference ordering over the lotteries, and therefore no implied utility function, even if the observable implies a unique preference ordering over the outcomes.

By adding the dimension of time, we cast the problem of choosing between lotteries as a gamble problem. Thus, a player's choice of strategy in a game (with some duration) becomes a gamble problem. We generalize the definition of standard games to incorporate dynamics, and call these \textit{capital games}. Similarly, we generalize the definition of Nash equilibrium to define \textit{growth equilibrium} in capital games. When dynamics are linearizable, we show that the growth equilibria of capital games correspond to the Nash equilibria in a standard game. The conversion from positive capital games to standard games is shown to be reversible, so their equilibria coincide exactly. 

Our results depend on the players' decision criterion for choosing between gambles based on dynamics, which we assume is based on the time-average growth rate of their capital. We aim to introduce and formalize these concepts without getting into the formal details of ergodicity.

\section{Capital Games}
We now define capital games, which are like standard games except that (1) payoffs are in arbitrary units of ``capital" rather than units of ``utility", (2) every player has some initial capital endowment, (3) every player's payoffs are characterized by dynamics which take place over some amount of time.

\begin{definition}[Capital Game]
    A finite capital game is characterized by a tuple $(N, W, A, x, D, f)$ where
    \begin{itemize}
    \item N is a finite set of $n \geq 1$ players, indexed by $i$;
    \item $A = (A_1, \ldots, A_n)$ where $A_i$ is a finite set of actions available to player $i$
    \item $x = (x_1, \ldots, x_n)$ where $x_i : A_1 \times \ldots \times A_n \rightarrow \mathbb{R}$ is a real-valued function that determines the payoff of every player based on the action profile in units of capital
    \item $W = (w_1, \ldots, w_n) \in \mathbb{R}^n$ is an initial endowment for each player, in units of capital.
    \item $D = (\dt_1, \ldots, \dt_n)$ the duration of the game for each player
    \item $f = (f_1, \ldots, f_n)$ where the $f_i$ are each deterministic functions that captures player $i$'s capital dynamics $f_i(x_i(a), w_i, \dt_i) \in \mathbb{R}$.
\end{itemize}
\end{definition}


\begin{definition}[Dynamics Linearization]
    Function $v_i$ is a linearization of capital dynamics $f_i$ if $v_i(f_i(x_i(a), w_i, \dt_i)) = \frac{v_i(x_i(a)) - v_i(w_i)}{\dt_i}$ for all $a \in \mathcal{A}$.
\end{definition}

For brevity we will denote capital games by $(N,W,A,x,f)$ and assume the game duration is equal for all players who act simultaneously and receive their payoffs simultaneously, so the duration is a constant normalized to 1.
We will therefore also use the shorthand $f_i(x_i(a), w_i)$ for capital dynamics.

We will assume that the capital dynamics of all players are linearizable, meaning that a dynamics linearization exists for each of them. Many different linearizable capital dynamics are possible. The two simplest dynamics are additive dynamics $f_i(x_i(a), w_i) = x_i(a) - w_i$ and multiplicative dynamics $f_i(x_i(a), w_i) = \frac{x_i(a)}{w_i}$, as we saw with the PGM coin flip. For additive dynamics $v_i(x)=x$ is a linearization because $f_i$ is already a linear function. For multiplicative dynamics, $v_i(x) = \ln x$ is a linearization because $\ln \frac{x_i(a)}{w_i} = \ln x_i(a) - \ln w_i$, which is defined for all positive capital games.

\begin{definition}[Positive Capital Game]
    A capital game $(N, W, A, x, f)$ is positive if $w_i > 0$ and $x_i(a) > 0$ for players $i$ and action profiles $a$.
\end{definition}

\section{Best Responses and Growth Equilibria}
Every player's \textit{best response} to the strategies of other players is the strategy that maximizes the expected value of their utility. However,  maximizing expected value is only the right decision criteria for choosing among strategies when their payoff is expressed in units of utility.
For general payoffs which may not be utilities, players determine their best responses in accordance with the decision axiom, using all available information.

\begin{proposition}[Decision Axiom \citep{eebook}]
    Players seek to maximize the time-average growth rate $\bar{g}_i$ of their capital.
\end{proposition}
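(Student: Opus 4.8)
The plan is to treat this statement as the adoption of the Ergodicity Economics decision axiom inside the capital-game framework, so that the ``proof'' is a justification that time-average growth rate is the right object to optimize here, together with a verification that $\bar g_i$ is well-posed under our standing assumptions. First I would fix a player $i$ and a profile $s_{-i}$ and spell out the repeated-decision model that the earlier sections only gesture at: the capital game is one time step of duration \dt in a (possibly infinite) sequence of independent plays of the same game, with $i$'s capital evolving between rounds according to the dynamics $f_i$. Under this model the player is not choosing among one-shot gambles --- which, as argued in Section~3, under-determine a VNM utility even when an endowment is given --- but among stochastic processes, and comparing processes needs exactly one extra ingredient, a choice axiom. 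Ergodicity Economics supplies it by selecting the process with the largest time-average growth rate, and I would argue this is the canonical choice precisely because it is the quantity an agent who cares about the long-run size of its capital would want to make as large as possible.

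Second, I would make $\bar g_i$ precise and check the limit exists. Using the linearization $v_i$ guaranteed by our standing assumption, set $\bar g_i = \lim_{T\to\infty}\frac{1}{T}\sum_{t=1}^{T} v_i\!\big(f_i(x_i(a^{(t)}),w_i)\big)$; by the defining identity of the linearization each summand equals $\frac{v_i(x_i(a^{(t)})) - v_i(w_i)}{\dt}$, and since the per-round action profiles $a^{(t)}$ are i.i.d.\ (or, more generally, stationary and ergodic) these increments are i.i.d., so the strong law of large numbers gives almost-sure convergence of the average to its expectation. This is the sense in which ``growth rates are ergodic'': the time-average equals the ensemble average, $\bar g_i = \frac{1}{\dt}\big(E[v_i(x_i(a))\mid s] - v_i(w_i)\big)$. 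Consequently, maximizing $\bar g_i$ over $s_i\in S_i$ is the same as maximizing $E[(v_i\circ x_i)(a)\mid s]$: the player acts as a VNM expected-utility maximizer with utility function $v_i\circ x_i$, which is exactly what will make the later correspondence between growth equilibria and Nash equilibria go through.

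I expect the main obstacle to be the modeling step rather than any calculation. One must pin down the ``each game is one step in a larger process'' picture tightly enough for an ergodic/LLN argument to apply while keeping it faithful to the informal description and to the assumption that rounds may be independent but need not be identical, and one must be careful about exactly how carried-forward capital enters $f_i$ so that the reference-endowment contribution really is a constant independent of player $i$'s current strategy and therefore drops out of the $\argmax$ defining the best response. In the non-positive case there is the additional worry that $v_i$ (e.g.\ $\ln$ under multiplicative dynamics) must be defined along every realized capital path for $\bar g_i$ to be well-defined at all, which is part of why the clean correspondence is later restricted to positive capital games.
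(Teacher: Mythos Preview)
The paper does not prove this statement at all: despite the \texttt{proposition} label, it is explicitly titled ``Decision Axiom'' and is treated exactly as that---a primitive assumption imported from Ergodicity Economics, not a theorem derived within the model. Immediately after stating it the paper simply records the formula $\bar g_i = E[v_i(f_i(x_i(a),w_i,\dt_i))\mid s]$ and moves on to the definition of best response. So there is nothing for your proposal to be compared against.

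Your write-up is a perfectly reasonable piece of \emph{motivation} for why one might adopt this axiom, and the LLN/ergodicity calculation you sketch is essentially the standard Ergodicity Economics story that the paper alludes to informally in Section~1 and Section~3. But you should not present it as a proof. An axiom cannot be proved from the other definitions in the paper; indeed your own argument has to import extra modeling assumptions (i.i.d.\ or stationary-ergodic repeated play, capital carried forward in a particular way) that the paper never commits to formally, precisely because it is content to take the decision criterion as given. If you want to keep this material, recast it as a remark or as informal justification preceding the axiom, and drop the ``proof'' framing.
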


The time-average growth rate of capital is ergodic, and in our setting
$$\bar{g}_i = E[v_i(f_i(x_i(a), w_i, \dt_i))|s]$$

\begin{definition}[Best Response]
Given a capital game $(N,W,A,x,f)$ and opponents' strategy profile $s_{-i}$, the best response of player $i$ is the strategy $s_{i}$ that maximizes the time-average growth rate of player $i$'s capital:
$$\bar{s}_i = \argmax\limits_{s_i \in S_i} E[v_i(f_i(x_i(a), w_i, \dt_i))|s]$$
\end{definition}

The time-average growth rate of player $i$ in a game depends on the full strategy profile $s$. We can define a general growth equilibrium $s^*$ when all players are playing their best responses, maximizing the time-average growth of their capital conditioned on the strategies of the other players.

\begin{definition}[Growth Equilibrium]
    Given an capital game $(N,W,A,x,f)$, a growth equilibrium is a strategy profile $s^* = (s^*_i, s^*_{-i})$ such that $s^*_i$ is a best response to $s^*_{-i}$ for all players $i \in N$.
\end{definition}

Recall that we assume complete information, so each player must know the capital dynamics and duration of the other players to infer their equilibrium strategies.

Capital dynamics can be inhomogenous, or different for every player.
Consider a gambler playing roulette in a casino. The gambler controls what fraction of their endowment they bet in each roll of the roulette wheel. Suppose the gambler bets a fixed fraction of their wealth on every game they play. Then the player's wealth faces multiplicative dynamics, but the casino's wealth experiences a different dynamics because the bet sizes are not a fixed fraction of the casino's wealth.

We prove that for all positive capital games where each player's capital dynamics admit a linearization, there is a correspondence between their growth equilibria and the Nash equilibria of a standard game, assuming players' capital dynamics are common knowledge.

\begin{theorem}[Equilibrium Correspondence]
    Let $G = (N, W, A, x, f, D)$ be a positive capital game, where $\dt_i=1$ and capital dynamics $(f_1,\ldots,f_n)$ can be linearized by $(v_1, \ldots, v_n)$, for all $i \in N, a \in \mathcal{A}$. Let $G' = (N, A, u)$ be the standard game where $u_i(a) = v_i(f_i(x_i(a),w_i))$ for all $i \in N, a \in \mathcal{A}$. Then the Nash equilibria of $G'$ are exactly the growth equilibria of $G$.
\end{theorem}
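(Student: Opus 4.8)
The plan is to show that, for each player, the two equilibrium notions are driven by \emph{literally the same} objective function, so that ``best response'' means the same thing in $G$ and in $G'$ and the two equilibrium sets coincide verbatim. The argument is essentially an unwinding of definitions; the only substantive checks are that $G'$ is a well-defined standard game and that the $\dt_i = 1$ normalization is used correctly.

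First I would verify that $G'$ is a legitimate standard game, i.e.\ that $u_i : \mathcal{A} \to \mathbb{R}$ for every $i$. Fix $i$ and $a \in \mathcal{A}$. Since $G$ is positive we have $w_i > 0$ and $x_i(a) > 0$, and since $f_i$ admits the linearization $v_i$ and $\dt_i = 1$ we may rewrite $u_i(a) = v_i(f_i(x_i(a),w_i)) = v_i(x_i(a)) - v_i(w_i)$, a finite real number on the positive domain. (This is exactly where positivity is needed: e.g.\ for multiplicative dynamics $v_i = \ln$, and both $\ln x_i(a)$ and $\ln w_i$ require positive arguments.) Hence $G' = (N,A,u)$ satisfies the definition of a standard game.

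Next comes the key identity. For any mixed strategy profile $s$ and any player $i$,
\begin{align*}
E[u_i \mid s] &= \sum_{a \in \mathcal{A}} u_i(a)\, s(a) = \sum_{a \in \mathcal{A}} v_i\big(f_i(x_i(a), w_i)\big)\, s(a) \\
&= E\big[v_i(f_i(x_i(a), w_i)) \mid s\big] = \bar{g}_i ,
\end{align*}
where the first equality is the definition of expected utility in $G'$, the second substitutes $u_i(a) = v_i(f_i(x_i(a),w_i))$, and the last line is the time-average growth rate of player $i$ in $G$ (using $\dt_i = 1$, so no stray $1/\dt_i$ factor appears). So the quantity a player maximizes in $G'$ is exactly the quantity a player maximizes in $G$. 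From here the theorem is immediate: for fixed $s_{-i}$, a strategy $\bar{s}_i$ is a best response of $i$ in $G'$ iff $E[u_i \mid (\bar{s}_i, s_{-i})] \geq E[u_i \mid (s_i, s_{-i})]$ for all $s_i \in S_i$, which by the identity is precisely the condition that $\bar{s}_i$ maximizes $\bar{g}_i$ given $s_{-i}$, i.e.\ that $\bar{s}_i$ is a best response of $i$ in $G$. Since a Nash equilibrium of $G'$ is a profile at which every player best-responds in $G'$, and a growth equilibrium of $G$ is a profile at which every player best-responds in $G$, the two sets of profiles are equal.

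I do not expect a genuine obstacle here: the content is a definitional reduction, and the care points are (a) confirming well-definedness of $G'$, which pins down where the positivity hypothesis is used, and (b) invoking $\dt_i = 1$ so that $\bar{g}_i = E[v_i(f_i(x_i(a),w_i)) \mid s]$ matches $E[u_i \mid s]$ on the nose. One could additionally note that the map $(x,f,v) \mapsto u$ is reversible up to the best-response-irrelevant additive constant $-v_i(w_i)$ (taking additive dynamics with $v_i = \mathrm{id}$ recovers $u_i(a) = x_i(a) - w_i$), so every standard game arises from some capital game, but this is not needed for the stated equivalence.
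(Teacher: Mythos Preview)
Your proof is correct and follows essentially the same approach as the paper: identify the best-response sets in $G$ and $G'$ by substituting $u_i(a) = v_i(f_i(x_i(a),w_i))$ into the expected-utility sum, then conclude the equilibria coincide. Your additional hygiene checks (well-definedness of $G'$ via positivity, the role of $\dt_i = 1$) are sound and more explicit than the paper's version, but the underlying argument is the same definitional reduction.
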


\begin{proof}
Given $s_{-i}$, the set of best responses of player $i$ is

\begin{align*}
\bar{S}_i 
&= \argmax\limits_{s_i \in S_i} \sum\limits_{a \in \mathcal{A}} v_i(f_i(x_i(a),w_i))s(a)\\
&= \argmax\limits_{s_i \in S_i} \sum\limits_{a \in \mathcal{A}} u_i(a)s(a) = \bar{S}'_i\\
\end{align*}

In a growth equilibrium $s^*$ of $G$, all players are playing their best responses to one another under their respective capital dynamics, which means that in $G'$ all players are playing their best responses to one another in $s^*$, and $s^*$ is therefore a Nash equilibrium.
\end{proof}

\begin{corollary}[Existence of Growth Equilibria]
    Every positive capital game with linearizable dynamics has at least one growth equilibrium.
\end{corollary}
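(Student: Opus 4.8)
The plan is to obtain the corollary as an immediate consequence of two results already in hand: the Equilibrium Correspondence theorem and the Existence of Nash Equilibria theorem. The only real work is checking that the hypotheses of the correspondence theorem are met, so that the reduction to a standard game is legitimate.

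First I would take an arbitrary positive capital game $G = (N, W, A, x, f)$ (with duration normalized, $\dt_i = 1$) whose capital dynamics $f_i$ each admit a linearization $v_i$. I would verify that the associated object $G' = (N, A, u)$ with $u_i(a) = v_i(f_i(x_i(a), w_i))$ is a bona fide standard game in the sense of the Standard Game definition: $N$ is finite and nonempty, each $A_i$ is finite, and each $u_i : \mathcal{A} \to \mathbb{R}$ is a well-defined real-valued function. The last point is where positivity is used — since $w_i > 0$ and $x_i(a) > 0$ for all $a$, the dynamics value $f_i(x_i(a), w_i)$ is defined, and the linearization $v_i$ is defined on the relevant range (e.g.\ $v_i = \ln$ for multiplicative dynamics, which requires exactly the positivity assumption), so $u_i(a)$ is a finite real number for every action profile $a$. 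Hence $G'$ is a standard game.

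Next I would invoke the Existence of Nash Equilibria theorem to conclude that $G'$ has at least one Nash equilibrium $s^*$. Finally, applying the Equilibrium Correspondence theorem — whose hypotheses (positivity of $G$, linearizability of the dynamics, duration one) are precisely those assumed here — the Nash equilibria of $G'$ coincide with the growth equilibria of $G$, so $s^*$ is a growth equilibrium of $G$. Since $G$ was an arbitrary positive capital game with linearizable dynamics, this establishes the claim.

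I do not anticipate a genuine obstacle: the content is entirely in the already-proved correspondence, and the residual step is the bookkeeping check that $G'$ is well-formed. If anything, the subtle point worth stating explicitly is that positivity is not merely inherited decoration but is exactly what guarantees the linearizations (and hence the utilities $u_i$) are defined — without it the construction of $G'$ could fail, so the corollary is correctly scoped to \emph{positive} capital games.
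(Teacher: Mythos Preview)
Your proposal is correct and matches the paper's approach: the corollary is stated without a separate proof in the paper, being an immediate consequence of combining the Equilibrium Correspondence theorem with Nash's existence theorem, exactly as you outline. Your additional bookkeeping check that $G'$ is a well-formed standard game (and the remark on why positivity matters) is a reasonable elaboration of what the paper leaves implicit.
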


\begin{corollary}[Complexity of Computing Growth Equilibria]
    Computing a growth equilibrium for positive capital games is PPAD-complete.
\end{corollary}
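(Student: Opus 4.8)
The plan is to establish the two halves of PPAD-completeness separately, in both directions invoking the Equilibrium Correspondence theorem together with the PPAD-completeness of computing Nash equilibria of standard games \citep{daskalakis2009complexity}.

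For membership in PPAD, I would proceed as follows. Given a positive capital game $G = (N, W, A, x, f)$ with linearizations $(v_1, \ldots, v_n)$, construct the standard game $G' = (N, A, u)$ with $u_i(a) = v_i(f_i(x_i(a), w_i))$, and then run a PPAD algorithm for \textsc{Nash} on $G'$; by the correspondence, its output is a growth equilibrium of $G$. The one thing to verify is that this is a polynomial-time many-one reduction: $G'$ has the same players and action sets as $G$, so its description has size polynomial in that of $G$, and filling in its payoff table requires evaluating $v_i(f_i(x_i(a), w_i))$ for each player $i$ and each of the $|\mathcal{A}|$ action profiles --- polynomially many evaluations, each of which is cheap when $f_i$ and $v_i$ admit polynomial-time (approximate) evaluation, as they do for the additive dynamics with $v_i = \mathrm{id}$ and the multiplicative dynamics with $v_i = \ln$ discussed above.

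For PPAD-hardness, I would exhibit a polynomial-time reduction from \textsc{Nash}. Given an arbitrary standard game $\tilde G = (N, A, \tilde u)$, choose a rational constant $c > 0$ large enough that $\tilde u_i(a) + c > 0$ for every $i$ and $a$ (computable in linear time), and form the capital game $G$ with the same $N$ and $A$, payoffs $x_i(a) = \tilde u_i(a) + c > 0$, endowments $w_i = c > 0$, and additive dynamics $f_i(x_i(a), w_i) = x_i(a) - w_i$, linearized by $v_i(x) = x$. Then $G$ is a positive capital game, and the standard game produced by the Equilibrium Correspondence theorem is $G'$ with $u_i(a) = x_i(a) - w_i = \tilde u_i(a)$, so $G' = \tilde G$; hence the growth equilibria of $G$ are exactly the Nash equilibria of $\tilde G$. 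Since $G$ is built from $\tilde G$ in polynomial time, computing a growth equilibrium is PPAD-hard, and combining with the previous paragraph it is PPAD-complete.

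The main obstacle is not conceptual but one of computational hygiene in the membership direction: the problem must be posed over a class of capital games whose dynamics $f_i$ and linearizations $v_i$ are efficiently evaluable (and, once one passes to the standard $\epsilon$-approximate formulation of PPAD search problems, efficiently approximable with polynomially related error), since an adversarially chosen, hard-to-evaluate $f_i$ or $v_i$ could either inject computational power the reduction cannot handle or break the equivalence between approximate growth equilibria and approximate Nash equilibria. For the natural dynamics considered in the paper this is routine, and the hardness direction is untouched by these concerns because it uses only additive dynamics with rational data.
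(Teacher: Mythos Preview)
Your proposal is correct and follows exactly the route the paper intends: the corollary is stated in the paper without proof, relying implicitly on the Equilibrium Correspondence theorem together with the PPAD-completeness of \textsc{Nash}, and your two reductions (forward via the correspondence for membership, reverse via additive dynamics for hardness, matching the reversibility construction in Section~6.1) simply make this explicit. Your discussion of computational hygiene for the representation and evaluation of $f_i$ and $v_i$ goes beyond anything the paper addresses, but it is a sound and worthwhile caveat.
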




\section{Additive and Multiplicative Dynamics}
For any positive capital game, we can use dynamics linearization to create a standard game whose Nash equilibria are exactly the growth equilibria in our original game. In general, this does not require the players to have the same capital dynamics or the same linearization functions. But here, we will show the construction and correspondence in detail when all players have additive and multiplicative dynamics.

\subsection{Additive Dynamics and Nash Equilibria}
Let's first look at the case where all players have additive capital dynamics. Under additive capital dynamics, $v_i(x)=x$ is a linearization because $f_i$ is already linear. That means capital is utility and utility is capital. Every player's best response is to maximize the expected value of their capital. With additive dynamics, the growth equilibrium is the Nash equilibrium.

%
Given $s_{-i}$, each best response is 

\begin{align*}
\bar{s}_i 
&= \argmax\limits_{s_i \in S_i} \sum\limits_{a \in \mathcal{A}} v_i(f_i(x_i(a),w_i))s(a)\\
&= \argmax\limits_{s_i \in S_i} \sum\limits_{a \in \mathcal{A}} (v_i(x_i(a)) - v_i(w_i))s(a)\\
&= \argmax\limits_{s_i \in S_i} \sum\limits_{a \in \mathcal{A}} (x_i(a)-w_i)s(a)\\
&= \argmax\limits_{s_i \in S_i} \sum\limits_{a \in \mathcal{A}} x_i(a)s(a)\\
\end{align*}

Given a positive capital game $G$, we can construct a standard game $G'=(N,A,u)$ by letting $u_i(a) = x_i(a) - w_i$ for all players and action profiles. Note that the players, action spaces, and strategy spaces are the same as in the capital game. We can see immediately that the best responses in the capital game are also the best responses in the standard game:

\begin{align*}
\bar{s}'_i = \bar{s}_i
&= \argmax\limits_{s_i \in S_i} \sum\limits_{a \in \mathcal{A}} u_i(a)s(a)
\end{align*}

 When all players respond this way, the growth equilibria in the capital game $G$ are the Nash equilibria in the standard game $G'$.

 This argument is reversible. Given a standard game $G'$ we can construct a positive positive capital game $G$ by selecting $w_i >0$ for each player, $x_i(a) = u_i(a)+w_i$, and $f_i(x_i(a), w_i) = x_i(a)-w_i$ for all players and action profiles. Formally, we must further specify $\dt_i = 1$ for all players. We can now follow the steps of the argument above in reverse to see that the growth equilibria of $G$ are the Nash equilibria of $G'$.

 Notice that for any standard game with additive dynamics we can set the endowments of each player to be whatever we want in the capital game. Thus, we can choose them to be the same as any positive capital game of interest. Therefore, there is a one-to-one correspondence between growth equilibria in positive capital games with additive dynamics for all players and Nash equilibria in the corresponding games.

\subsection{Multiplicative Dynamics and Kelly Equilibria}
Let's look at the case where all players have multiplicative capital dynamics with linearization $v(x) = \ln x$. Player $i$ chooses their best response based on maximizing their time-average growth rate under multiplicative dynamics.
%
Given $s_{-i}$, each best response is 

\begin{align*}
\bar{s}_i 
&= \argmax\limits_{s_i \in S_i} \sum\limits_{a \in \mathcal{A}} v_i(f_i(x_i(a),w_i))s(a)\\
&= \argmax\limits_{s_i \in S_i} \sum\limits_{a \in \mathcal{A}} (v_i(x_i(a)) - v_i(w_i))s(a)\\
&= \argmax\limits_{s_i \in S_i} \sum\limits_{a \in \mathcal{A}} (\ln(x_i(a)) - \ln(w_i))s(a)\\
&= \argmax\limits_{s_i \in S_i} \sum\limits_{a \in \mathcal{A}} \ln(x_i(a))s(a)\\
&= \argmax\limits_{s_i \in S_i} \sum\limits_{a \in \mathcal{A}} \ln(x_i(a)^{s(a)})\\
&= \argmax\limits_{s_i \in S_i} \ln \prod\limits_{a \in \mathcal{A}} x_i(a)^{s(a)}\\
&= \argmax\limits_{s_i \in S_i} \prod\limits_{a \in \mathcal{A}} x_i(a)^{s(a)}\\
\end{align*}

which is well-defined because $\ln(x)$ is defined for all $x>0$, and we are considering positive capital games.

Given a positive capital game $G = (N, W, A, x, f)$ where all players have multiplicative capital dynamics (and all $\dt_i = 1$), we can create a standard game $G' = (N,A,u)$ where $u_i(a) = \ln x_i(a) - \ln w_i$ for all players $i$ and action profiles $a$. As with additive dynamics, the players, action spaces, and strategy spaces are the same, and now the growth equilibria of $G$ are all Nash equilibria in $G'$.

Once again we can reveres the process. Given a standard game $G' = (N,A,u)$, create a positive capital game $G = (N, W, A, x, f)$ where $w_i > 0$, $x_i(a) = e^{u_i(a)}w_i$, $f_i(x_i(a),w_i) = \frac{x_i(a)}{w_i}$, and $\dt_i = 1$, for all players and action profiles. The capital game is guaranteed to be positive because $e^x > 0$ for all $x \in \mathbb{R}$. We can follow the steps above in reverse to see that the growth equilibria of $G$ are the Nash equilibria of $G'$.

\section{Pure Growth Equilibria}
 Just as we can define pure Nash equilibria, we can define pure growth equilibria where players are restricted to pure strategies rather than mixed strategies.

\begin{definition}[Growth Rate]
    Given a monotonically increasing function $v$, and action profile $a$, we define the growth rate of the utility of player $i$ to be $g^v_i(a) = \frac{v(x_i(a))-v(w_i)}{\dt_i}$.
\end{definition}

\begin{definition}[Best Response in Pure Strategies]
    When all players are restricted to pure strategies (i.e., actions), a best response to $a_{-i}$ is $$a^*_i = \argmax\limits_{a_i \in A_i} g^v_i(a_i, a_{-i})$$ which does not depend on the choice of function $v$.
\end{definition}

\begin{definition}[Pure Growth Equilibrium]
    A pure growth equilibrium is a strategy profile, which is also an action profile, $a^*$ such that every player $i$ is playing a best response in pure strategies $a_i^*$ to $a^*_{-i}$.
\end{definition}

\begin{theorem}
    If $a^*$ is a pure growth equilibrium for any capital dynamics, then it is also a pure growth equilibrium for all other capital dynamics.
\end{theorem}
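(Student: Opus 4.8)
The plan is to observe that the pure best-response correspondence does not reference the capital dynamics at all, so the set of pure growth equilibria is an invariant of the payoff data only. First I would unpack the definition of $g^v_i$: for a fixed opponent profile $a_{-i}$ and a monotonically increasing $v$, we have $g^v_i(a_i,a_{-i}) = \bigl(v(x_i(a_i,a_{-i})) - v(w_i)\bigr)/\dt_i$, in which both $v(w_i)$ and $\dt_i > 0$ are constants that do not depend on $a_i$. Hence $\argmax_{a_i \in A_i} g^v_i(a_i,a_{-i}) = \argmax_{a_i \in A_i} v(x_i(a_i,a_{-i}))$, and since $v$ is (strictly) increasing it preserves the ordering of the finitely many values $\{x_i(a_i,a_{-i})\}_{a_i \in A_i}$, so this set equals $\argmax_{a_i \in A_i} x_i(a_i,a_{-i})$. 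This makes precise the earlier remark that the pure best response ``does not depend on the choice of function $v$.''

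The second step is to note that the resulting correspondence $a_{-i} \mapsto \argmax_{a_i \in A_i} x_i(a_i,a_{-i})$ is a function of the data $(N,A,x)$ alone: it makes no use of the endowment $W$ (beyond an irrelevant additive shift of the objective), the durations $D$, the dynamics $f$, or their linearizations $v$. A pure growth equilibrium is, by definition, exactly an action profile $a^*$ at which every coordinate $a^*_i$ lies in this correspondence evaluated at $a^*_{-i}$. Consequently the set of pure growth equilibria of a capital game coincides with the set of pure Nash equilibria of the standard game with utilities $u_i(a) = x_i(a)$ (equivalently $x_i(a) - w_i$), and in particular it is the same set no matter which linearizable dynamics $f$ the game is equipped with. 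This yields the theorem immediately: if $a^*$ is a pure growth equilibrium under one choice of dynamics, it is a pure growth equilibrium under every other.

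There is essentially no hard step here; the only point requiring care is the interface between ``monotonically increasing'' and the argmax argument. If $v$ were only weakly increasing, then $\argmax_{a_i} v(x_i(a_i,a_{-i}))$ could strictly contain $\argmax_{a_i} x_i(a_i,a_{-i})$, and the correspondence would genuinely depend on $v$; so I would either state explicitly that linearizations are taken strictly increasing — consistent with their being invertible, as the additive $v(x)=x$ and multiplicative $v(x)=\ln x$ examples are — or restrict the claim to such $v$. I would also record the standing hypotheses that each $v$ is defined on the range of $x_i$ and on $w_i$ (e.g. positivity of the game when $v=\ln$) and that $\dt_i>0$, since these are precisely what make the displayed equalities legitimate, while noting that none of them influences the argmax, which is the entire point of the theorem.
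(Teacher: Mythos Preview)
Your argument is correct and follows essentially the same line as the paper: both observe that for a fixed $a_{-i}$ the pure best response $\argmax_{a_i} g^v_i(a_i,a_{-i})$ reduces, after discarding the additive constant $v(w_i)$ and positive factor $1/\dt_i$, to $\argmax_{a_i} x_i(a_i,a_{-i})$ because a strictly increasing $v$ preserves the ordering of payoffs. Your version is more explicit about the role of strict monotonicity and the standing hypotheses, but the underlying idea is identical.
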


\begin{proof}
    If there exists a pure growth equilibrium $a^*$, then each strategy $a_i$ is a choice between actions, rather than probability distributions, and each player $i$ is choosing an action deterministically that maximizes their payoff. This maximizes their deterministic growth rate, and therefore maximizes their time-average growth rate under any assumed dynamics.
\end{proof}

Preferences over degenerate, deterministic lotteries are exactly preferences over outcomes, so the observable (capital) is a vNM utility.

\section{Revisiting the PGM Coin Flip}
In a non-cooperative capital game, a best response is one that maximizes time-average growth rate of capital. To translate from capital to vNM utility we need linearization of dynamics. But where to these dynamics come from?

Let's look back at the infamous coin flip from~\cite{peters2016evaluating}. Consider a player with endowment $w = 100$, action space $A = \{a_1, a_2\}$, and payoffs $x(a^1)=150$ and $x(a^2)=60$. If dynamics are additive, then $f(x(a_1),w) = 150 - 100 = 50$ and $f(x(a_2),w) = 60 - 100 = -40$, and the corresponding standard game has $u_i(a) = f(x(a),w)$, so the utilities are $50$ and $-40$. Thus, if we play the game, we have an expected utility of 5. This corresponds to the expected gain of $\frac{150 + 60}{2} - 100 = 5$, so it is better to play the game than to hold onto cash. On the other hand, if the dynamics are multiplicative, then $f(x(a_1),w) = \frac{150}{100} = 1.5$ and $f(x(a_2),w) = \frac{60}{100} = 0.6$, and the corresponding standard game has $u_i(a) = \ln f(x(a),w)$, so the utilities are $\ln 1.5 \approx 0.405$ and $\ln 0.6 \approx -0.511$. The expected utility is $\frac{0.405 - 0.511}{2} = -0.053$, representing an expected loss of utility. It s better not to play. The key observation here is that the payoffs and endowments together do not imply the dynamics, and specifying the dynamics allows us to derive the vNM utilities for players who wish to maximize the time-average growth rate of their capital.

So why should a player have additive or multiplicative dynamics or a totally different dynamics altogether? In standard games, the meaning of utilities could be ignored because regardless of what it represents the player seeks to maximize its expected value. This is built into the definition of a utility. But in capital games we cannot make this assumption. The dynamics depend on factors not otherwise captured in the definition of a capital game. The answer is application-specific.

\section{Conclusions}
We introduce capital games to extend the definition of standard games to incorporate dynamics, and show that the growth equilibria of capital games with linearizable dynamics correspond to the Nash equilibria of standard games when players seek to maximize the time-average growth rate of their capital.
Broadly, we have shown that linearizable dynamics allow us to infer unique preference orders over lotteries with real-valued payoffs, and therefore derive utility functions.
Notably, the units of utility function may not be the same units as the original capital payoffs, e.g., utilities may be in log-dollars instead of dollars.

Here we have not captured uncertainty about dynamics, or dealt with non-linearizable dynamics, with are important directions for future work. With deterministic dynamics, the only source of uncertainty comes from players' mixed strategies. 

Determining the appropriate dynamics for a given application is an important general problem for Mechanism Design and Decision Theory. 
One theoretical setting in which dynamics should be clear is in infinitely repeated games where a player's capital at the end of one stage game becomes their endowment at the beginning of the next. For example, if the coin flip game is repeated many times in sequence with multiplicative dynamics, then the payoffs keep changing in every round but the ratio between payoff and endowment stays the same for every action profile.

\bibliographystyle{plainnat}
\bibliography{bib}
\end{document}